\documentclass[11pt]{article}

\usepackage[margin=1in]{geometry}
\usepackage{amsmath}
\usepackage{amssymb}
\usepackage{amsthm}
\usepackage{natbib}
\usepackage{graphicx}
\usepackage[bookmarksnumbered,colorlinks,bookmarks,citecolor=blue,linkcolor=blue,urlcolor=blue,breaklinks,linktocpage]{hyperref}
\usepackage{booktabs}
\usepackage{algorithm}
\usepackage{algpseudocode}
\usepackage{float}

\graphicspath{{figures/}{docs/figures/}{./}}

\title{Tonal parsimony in chord-sequence analysis:\\
combining modulation cost and tonal vocabulary}

\author{Fran\c{c}ois Pachet\thanks{Corresponding author. Email: pachet@gmail.com}\\
LIP6, Sorbonne Universit\'e, Paris, France\\
Ynosound, Paris, France}
\date{}

\newcommand{\cost}{C}
\newcommand{\nval}{K}
\newcommand{\cand}{T}
\newcommand{\keys}{\mathcal{T}}

\theoremstyle{plain}
\newtheorem{proposition}{Proposition}[section]

\newenvironment{keywords}{\paragraph*{Keywords:}}{}

\begin{document}
\maketitle

\begin{abstract}
We study the assignment of local tonalities to chord sequences, a task useful
for harmonic analysis, composition, and jazz-oriented improvisation. Standard
dynamic-programming approaches minimize modulations but can introduce
unnecessarily many tonal centers. We compare this transition-only objective
with pure minimum-vocabulary analysis and with tonal parsimony, which minimizes
lexicographically the number of modulations and then the number of distinct
tonalities. Although this joint objective is combinatorially hard in general,
we give exact algorithms exploiting the fixed 24-tonality major/minor universe.
On 31,032 LMD Chords sequences, tonal parsimony preserves the transition
optimum while reducing tonal vocabulary in 55.8\% of cases. With weighted
jazz-substitution closure, it lowers mean tonalities from 3.802 to 3.206 and
modulations from 16.728 to 12.141. On 1,555 annotated jazz standards, it
improves compatible chord-scale agreement to 95.6\%, supporting tractable
professional-scale harmonic analysis.
\end{abstract}

\begin{keywords}
tonal parsimony; chord-sequence analysis; jazz harmony; dynamic programming;
constraint programming; tonal vocabulary; chord-scale agreement; substitution
rules
\end{keywords}

\section{Introduction}

Given a chord sequence, a harmonic analysis often aims to explain each chord as
belonging to one or more local tonal regions.  For example, the chord
\texttt{Am7} may be compatible with C major, A minor, G major, and other
tonalities depending on the chord vocabulary used.  A dominant seventh chord
such as \texttt{D7} naturally points toward G, but may also be interpreted in
other contexts.  In jazz and related idioms, such an analysis has an immediate
practical use: the assigned tonality can be mapped to a scale, providing a
compact guide for improvisation, reharmonization, or generative composition.

This paper separates the local compatibility layer from the global optimization
layer.  For each chord position $i$, we assume that a separate process has
already computed a set $\cand_i$ of candidate tonalities.  In the baseline
experiments reported below, these candidates are drawn from the usual 12 major
and 12 minor tonalities using a direct chord-scale compatibility relation.  We
then study an extension in which this relation is enriched by weighted jazz
substitution rules.  In both cases, the task studied here is the global one:
choose one tonality $x_i \in \cand_i$ for each chord while optimizing the
sequence-level criteria.

Dynamic-programming and optimal-segmentation approaches to local key or
tonality analysis have a long history.  They commonly combine local evidence
with a transition penalty, or equivalently prefer analyses with fewer key
changes.  Temperley's reconsideration of the Krumhansl--Schmuckler key-finding
algorithm explicitly introduces a modulation penalty between successive
segments \citep{temperley1999}.  The jazz harmonic analysis of
\citet{choi2011} is close to the present setting: harmonic analysis is
formulated as an optimal tonality-segmentation problem over a chord sequence.
Recent
work on joint pitch spelling and local/global key estimation also uses
dynamic-programming style optimization over musical sequences
\citep{bouquillard2024}.

Probabilistic approaches provide a related inference perspective.
\citet{raphaelstoddard2004} formulate functional harmonic analysis as
inference in a probabilistic model over keys and harmonic functions.  More
generally, tonal-space representations such as Chew's Spiral Array provide
formal models of tonal centers and tonal distance \citep{chew2014}, while
statistical chord-sequence models study latent harmonic categories and
self-emergent grammar \citep{tsushima2018}.  Grammar-based accounts of jazz
harmony, notably the generative grammar for jazz chord sequences of
\citet{steedman1984}, provide a complementary formal view of substitutional and
recursive harmonic structure.

Existing dynamic-programming approaches to harmonic analysis penalize
modulations or segment transitions, but do not optimize the cardinality of the
set of tonal centers used in the analysis.  We introduce a tonal-parsimony
criterion formulated as a lexicographic optimization of modulation cost and
tonal vocabulary size.  This captures the musical intuition that key has
inertia, while adding a second intuition: among equally smooth analyses, one
should prefer a smaller tonal vocabulary.

The paper makes three contributions.  First, it formulates tonal parsimony as
the lexicographic objective
\[
  \min_{\mathrm{lex}}(\cost,\nval),
\]
where $\cost$ is the modulation cost and $\nval$ is the number of distinct
tonalities used.  Second, it gives a specialized exact dynamic program for this
objective, realizing the combination of a cost-regular path criterion and an
\textsc{NValue} criterion over a fixed tonality universe.  Third, it introduces
a substitution-closure layer that expands local candidate domains with weighted
jazz-harmonic reinterpretations before global optimization.

The strongest empirical test is the external chord-scale validation: on 1,555
professionally annotated jazz standards~\citep{jazzstandardsbook}, tonal
parsimony reaches 81.1\% strict and 95.6\% conservative compatible agreement,
outperforming both baselines while preserving the minimum modulation cost of
the transition-only dynamic program.  This indicates that the criterion is not
only a more compact objective, but also a better predictor of expert local
scale annotations.

\section{Problem Statement}

Let
\[
  c_1,\ldots,c_n
\]
be a chord sequence.  For each chord $c_i$, let
\[
  \cand_i \subseteq \keys
\]
be its set of candidate tonalities, where in the present work
$|\keys|=24$.  An analysis is a sequence
\[
  x_1,\ldots,x_n
\]
such that $x_i \in \cand_i$ for all $i$.

We define two quantities:
\[
  \cost(x) = \sum_{i=1}^{n-1} [x_i \ne x_{i+1}],
\]
the number of modulations or adjacent tonality changes, and
\[
  \nval(x) = |\{x_1,\ldots,x_n\}|,
\]
the number of distinct tonalities used by the analysis.

The classical dynamic-programming objective minimizes $\cost$.  Our proposed
objective minimizes $(\cost,\nval)$ lexicographically: first minimize the number
of modulations, and among all equally smooth analyses choose one using fewer
tonalities.

\section{Constraint Satisfaction Formulation}

The assignment problem can be modeled as a finite-domain constraint
optimization problem \citep{rossi2006}.  There is one decision variable $x_i$
per chord position, with finite domain $\cand_i \subseteq \keys$.  The hard
constraints are the domain constraints $x_i \in \cand_i$; the musical
criteria are expressed as global quantities over the whole sequence.

The modulation cost $\cost(x)$ is a regular sequence cost: it can be represented
by a small automaton that reads the selected tonalities and accumulates a unit
cost whenever two consecutive labels differ.  This places the transition-only
objective in the family of \textsc{Regular} and cost-regular global constraints
\citep{pesant2004,demassey2006}.  The number of distinct tonalities
$\nval(x)$ is the value counted by the \textsc{NValue} global constraint
\citep{bessiere2006}.  This follows the broader constraint-programming practice
of treating global constraints as structured relations whose graph or automaton
structure can be exploited algorithmically \citep{beldiceanu2000}.

This constraint-programming view is primarily a modeling statement.  A literal
off-the-shelf CP model combining a cost-regular path constraint with an
\textsc{NValue} objective is unlikely to be tractable at the scale considered
here: the solver would still have to coordinate a sequence cost with a global
cardinality objective over thousands of instances.  The algorithms below are
specialized exact algorithms inspired by this model.  They exploit the fixed
24-tonality universe directly, instead of relying on generic
search over the combined global constraints.

As a sanity check, we wrote the direct model in MiniZinc using
\textsc{costRegular} and \textsc{NValue}.  It reproduces the expected optima on
small examples, including the substitution-enriched progression
\texttt{Cmaj7 F\#7 F Bb7 Em Am Dm Db7}, for which it returns a single C-major
path.  Table~\ref{tab:minizinc-sanity} gives exploratory timings with
MiniZinc/Gecode 6.3.0 and the specialized implementation.

\begin{table}[h]
\centering
\scriptsize
\begin{tabular}{lrrrr}
\toprule
Instance & Chords & Tonalities & MiniZinc/Gecode & Specialized Python \\
\midrule
Tie-break example, $(\cost,\nval)=(2,2)$ & 3 & 3 & 0.18 s & 0.013 ms \\
Substitution example, $(\cost,\nval)=(0,1)$ & 8 & 4 & 0.16 s & 0.019 ms \\
Corpus excerpt, $(\cost,\nval)=(17,5)$ & 32 & 22 & 21.9 s & 0.26 ms \\
Corpus excerpt, $(\cost,\nval)=(25,8)$ & 48 & 22 & $>2$ h 15 min & 0.61 ms \\
\bottomrule
\end{tabular}
\caption{Direct MiniZinc/Gecode model versus our specialized fixed-universe
implementation written in Python, measured on a MacBook Pro with an M1 Pro
processor.  The 48-chord MiniZinc run had not returned a solution after two
hours and fifteen minutes.}
\label{tab:minizinc-sanity}
\end{table}

This confirms that the CP formulation provides a declarative specification and
validation baseline, but that scalability requires the fixed-universe algorithm
used in the experiments below.

The three methods studied below correspond to three objective choices over the
same constraint model:
\[
  \min \cost(x), \qquad
  \min \nval(x), \qquad
  \min_{\mathrm{lex}}(\cost(x),\nval(x)).
\]
Thus the contribution is not a new compatibility model for chords and scales,
but a comparison of three constraint-optimization objectives for selecting one
tonality from each candidate domain.

\section{Three Methods}

\subsection{Transition-only dynamic programming}

The first method is the standard minimum-modulation path:
\[
  \min_x \cost(x).
\]
This can be solved by dynamic programming over positions and current
tonality.  Let
\[
  F_i(t)
\]
be the minimum modulation cost of a partial analysis ending at position $i$
with tonality $t$.  The Bellman recurrence is
\[
  F_1(t)=0 \qquad (t\in \cand_1),
\]
and for $i>1$,
\[
  F_i(t)=
  \min_{s\in \cand_{i-1}}
  \left(F_{i-1}(s) + [s\ne t]\right)
  \qquad (t\in \cand_i).
\]
The best final cost is $\min_{t\in\cand_n}F_n(t)$, and the path is recovered
by storing the predecessor $s$ that attains each minimum.  If $m=|\keys|$, the
time complexity is
\[
  O(nm^2)
\]
or less when the candidate domains are sparse.  This method is efficient and
often musically plausible, but it does not optimize $\nval$.  In constraint
programming terms, it is a shortest-path realization of a cost-regular
objective over the sequence of tonality variables.

\begin{algorithm}[H]
\caption{Transition-only dynamic programming}
\label{alg:transition-dp}
\begin{algorithmic}[1]
\Require Candidate domains $\cand_1,\ldots,\cand_n$
\Ensure A path $x_1,\ldots,x_n$ minimizing $\cost$
\ForAll{$t\in\cand_1$}
  \State $F_1(t)\gets 0$
  \State $P_1(t)\gets \bot$
\EndFor
\For{$i\gets 2$ to $n$}
  \ForAll{$t\in\cand_i$}
    \State $F_i(t)\gets \min_{s\in\cand_{i-1}}\left(F_{i-1}(s)+[s\ne t]\right)$
    \State $P_i(t)\gets \arg\min_{s\in\cand_{i-1}}\left(F_{i-1}(s)+[s\ne t]\right)$
  \EndFor
\EndFor
\State $x_n\gets \arg\min_{t\in\cand_n}F_n(t)$
\For{$i\gets n$ down to $2$}
  \State $x_{i-1}\gets P_i(x_i)$
\EndFor
\State \Return $x_1,\ldots,x_n$
\end{algorithmic}
\end{algorithm}

\subsection{Pure NValue minimization}

The second method ignores transitions entirely.  It searches for the smallest
set $S \subseteq \keys$ such that every chord has at least one candidate in
$S$:
\[
  S \cap \cand_i \ne \emptyset
  \qquad \text{for all } i.
\]
This is a hitting-set problem \citep{karp1972} and is equivalent to minimizing
the value of an \textsc{NValue} constraint \citep{bessiere2006} on the sequence
of tonality variables.  Once a
minimum-cardinality set $S$ is found, each chord is assigned to one compatible
tonality in $S \cap \cand_i$.

In the general case, hitting set is NP-hard \citep{karp1972}.  In the present
musical setting, however, $m=24$ is fixed.  Candidate sets are represented as
bit masks, and the search proceeds by increasing cardinality.  The dependence
on $m$ is exponential, but the dependence on sequence length has the
fixed-parameter form $O(f(m)n)$; a direct subset-search implementation is
$O(2^m n)$ up to small polynomial factors in $m$.

This method provides an informative extreme baseline.  It minimizes the tonal
vocabulary but may produce implausibly frequent modulations, since it ignores
adjacency altogether.

\begin{algorithm}[H]
\caption{Hitting-set \textsc{NValue} minimization}
\label{alg:hitting-set}
\begin{algorithmic}[1]
\Require Candidate domains $\cand_1,\ldots,\cand_n$ over universe $\keys$
\Ensure A path $x_1,\ldots,x_n$ minimizing $\nval$
\State Convert each domain $\cand_i$ into a bit mask $M_i$
\State $R\gets$ the inclusion-minimal masks among $\{M_1,\ldots,M_n\}$
\For{$k\gets 1$ to $|\keys|$}
  \State $S\gets \Call{Hit}{0,k,R}$
  \If{$S\ne\bot$}
    \For{$i\gets 1$ to $n$}
      \State $x_i\gets$ a deterministic tonality in $S\cap M_i$
    \EndFor
    \State \Return $x_1,\ldots,x_n$
  \EndIf
\EndFor
\State \Return failure
\Function{Hit}{$S,k,R$}
  \If{$S\cap M\ne\emptyset$ for all $M\in R$}
    \State \Return $S$
  \EndIf
  \If{$|S|=k$}
    \State \Return $\bot$
  \EndIf
  \State $M^\star\gets$ an uncovered mask in $R$ with minimum cardinality
  \ForAll{$t\in M^\star$}
    \State $S'\gets \Call{Hit}{S\cup\{t\},k,R}$
    \If{$S'\ne\bot$}
      \State \Return $S'$
    \EndIf
  \EndFor
  \State \Return $\bot$
\EndFunction
\end{algorithmic}
\end{algorithm}

The first feasible cardinality $k$ is the optimum value of $\nval$.  The final
assignment is not optimized for transitions; its path is only one witness that
the chosen tonal vocabulary covers the whole sequence.

\subsection{Tonal parsimony: minimum modulation then NValue}

The third method minimizes:
\[
  \min_{\mathrm{lex}}(\cost(x),\nval(x)).
\]
Equivalently, it asks: among all minimum-modulation analyses, which uses the
fewest distinct tonalities?

A generic formulation combines a path-cost constraint, as in cost-regular, with
an \textsc{NValue} constraint.  As noted above, this formulation gives a
declarative specification; the computation below uses a fixed-universe exact
algorithm.  The label formulation uses states of the form
\[
  (i,t,S,\cost),
\]
where $i$ is the position, $t$ is the current tonality, $S$ is the set of
tonalities already used, and $\cost$ is the accumulated modulation cost.  Since
there are only 24 tonalities, $S$ is represented as a bit mask.

Dominance pruning is crucial.  A label $A=(\cost_A,S_A)$ dominates another
label $B=(\cost_B,S_B)$ at the same position and current tonality if
\[
  \cost_A \le \cost_B
  \quad\text{and}\quad
  S_A \subseteq S_B.
\]
The dominated label can be discarded because no continuation can make it
better than the dominating one.

The label formulation is fixed-parameter tractable in the number $m$ of
tonalities: one can store, for each position and current tonality, the
nondominated used-tonality masks, giving $O(f(m)n)$ time, for example
$O(nm^2 2^m)$ in a straightforward implementation.  The dependence on $m$ is
exponential, but with fixed $m=24$ and small chord domains it is fast enough in
practice.  When $m$ is part of the input, the lexicographic problem is
NP-hard by reduction from Hitting Set~\citep{karp1972}: alternate each
hitting-set subset domain with a singleton separator tonality; every feasible
path has the same transition cost, and minimizing $\nval$ then chooses a
minimum hitting set plus the separator.  For the 0/1 transition cost, a path
can be viewed as a sequence of
constant-tonality segments: first minimize the number of segments, then apply
\textsc{NValue} pruning only among minimum-segment paths.  This keeps the
algorithm tied to the constraint model while avoiding unnecessary enumeration
of dominated paths.

The implemented version uses this segment view.  A segment is an interval of
consecutive chords that can all be assigned the same tonality.  Since
$\cost$ counts changes between adjacent positions, minimizing $\cost$ is
equivalent to minimizing the number of constant-tonality segments.

\begin{algorithm}[H]
\caption{Tonal parsimony by minimum segments and \textsc{NValue} pruning}
\label{alg:tonal-parsimony}
\begin{algorithmic}[1]
\Require Candidate domains $\cand_1,\ldots,\cand_n$
\Ensure A path minimizing $(\cost,\nval)$ lexicographically
\State Convert each domain $\cand_i$ into a bit mask $M_i$
\State $\mathcal{E}\gets\emptyset$
\For{$a\gets 1$ to $n$}
  \State $Q\gets$ all-tonalities mask
  \For{$b\gets a$ to $n$}
    \State $Q\gets Q\cap M_b$ \Comment{$Q=\bigcap_{j=a}^{b}M_j$}
    \If{$Q=\emptyset$}
      \State \textbf{break}
    \EndIf
    \State Add segment $(a,b,Q)$ to $\mathcal{E}$
  \EndFor
\EndFor
\State $B(n+1)\gets 0$
\For{$a\gets n$ down to $1$}
  \State $B(a)\gets 1+\min\{B(b+1):(a,b,Q)\in\mathcal{E}\}$
\EndFor
\State $L_1\gets \{(\emptyset,\bot)\}$ and $L_a\gets\emptyset$ for $a>1$
\For{$a\gets 1$ to $n$}
  \ForAll{$(S,\pi)\in L_a$}
    \ForAll{$(a,b,Q)\in\mathcal{E}$}
      \If{$\mathrm{segments}(\pi)+1+B(b+1)=B(1)$}
        \ForAll{$t\in Q$}
          \State $\pi'\gets$ append segment $(a,b,t)$ to $\pi$
          \State Add label $(S\cup\{t\},\pi')$ to $L_{b+1}$
          \State Remove from $L_{b+1}$ any label whose used-tonality set is a superset of another
        \EndFor
      \EndIf
    \EndFor
  \EndFor
\EndFor
\State Choose $(S^\star,\pi^\star)\in L_{n+1}$ minimizing $|S^\star|$
\State \Return path reconstructed from $\pi^\star$
\end{algorithmic}
\end{algorithm}

The masks $M_i$ are used only in the first phase, where they are accumulated
into segment masks $Q$.  A segment $(a,b,Q)$ records all tonalities that can
cover every chord from position $a$ to position $b$; the subsequent dynamic
program therefore no longer needs to inspect the original chord domains.

\begin{proposition}
Algorithm~\ref{alg:tonal-parsimony} returns an analysis minimizing
$(\cost,\nval)$ lexicographically.
\end{proposition}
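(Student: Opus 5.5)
The proof rests on a correspondence between analyses and segmentations. Given a segmentation of $1,\ldots,n$ into consecutive intervals $[a_1,b_1],\ldots,[a_r,b_r]$ and a tonality $t_j\in\bigcap_{i=a_j}^{b_j}\cand_i$ for each interval, we read off an analysis. Conversely, every analysis $x$ splits into maximal constant runs, and each run $[a,b]$ with tonality $t$ satisfies $t\in\bigcap_{i=a}^{b}M_i$. The loop building $\mathcal{E}$ enumerates exactly the intervals with nonempty intersection. If an interval $[a,b']$ is feasible, then every prefix $[a,b]$ with $b\le b'$ is feasible, since the running intersection only shrinks. So the \textbf{break} loses no segment, and $\mathcal{E}$ contains every feasible $(a,b,Q)$ with the full set $Q$.

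For an analysis with maximal runs, $\cost(x)=r-1$, where $r$ is the number of runs. A segmented path whose adjacent segments may share a tonality yields an analysis with $\cost\le r-1$. Hence $\min_x\cost(x)=B(1)-1$, where $B(a)$ is the minimum number of segments covering positions $a,\ldots,n$. The backward recurrence for $B$ is the standard shortest-path recurrence on the DAG of segments. I will check by induction on $a$ from $n$ down that $B(a)$ is correct, with $B(a)=\infty$ if no covering exists; since $\cand_i\neq\emptyset$, the value $B(1)$ is finite.

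Next comes the labeling phase. Its filter $\mathrm{segments}(\pi)+1+B(b+1)=B(1)$ keeps exactly the partial segmentations that extend to a minimum-segment segmentation. In any min-segment path, two adjacent segments must carry different tonalities, since otherwise merging them would give fewer segments. So complete paths of length $B(1)$ correspond bijectively to minimum-$\cost$ analyses, with the segments being their maximal runs. For such a path, $\nval$ equals $|S|$, the set of tonalities accumulated along it.

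The key invariant, proved by induction on $a$, is the following. For every min-segment partial path $\pi$ reaching $a$ (covering $1,\ldots,a-1$ with $\mathrm{segments}(\pi)+B(a)=B(1)$), $L_a$ contains a label $(S',\pi')$ with $S'\subseteq S(\pi)$ and $\mathrm{segments}(\pi')=\mathrm{segments}(\pi)$. I expect this to be the main obstacle, for two reasons.

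First, the pruning compares labels only by their used sets. Every label in $L_a$ lies on a min-segment prefix, so every label has the same segment count, namely $B(1)-B(a)$. Therefore the pruning amounts to the paper's dominance with equal cost, and the count check in the filter is unaffected by swapping $\pi$ for $\pi'$.

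Second, the current tonality is not part of the label. This is harmless here, because the extension $(a,b,t)$ is permitted regardless of the previous tonality. If $t$ equals the previous tonality, the result is a non-minimal segmentation, which cannot pass the filter by the merging argument. More simply, the label is then counted with $B(1)$ segments, but it corresponds to an analysis with at most that cost and with $\nval\le|S'|$, so it never hurts.

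The induction step goes as follows. Extend $\pi'$ by the same next segment $(a,b,t)$ used by $\pi$. The filter passes because the segment counts agree, the new set satisfies $S'\cup\{t\}\subseteq S(\pi)\cup\{t\}$, and any later pruning replaces a label only by one whose set is a subset, so the invariant is preserved.

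To conclude, take an optimal analysis and its min-segment path. The invariant at $a=n+1$ gives a label in $L_{n+1}$ with used set of size at most its $\nval$. Every label in $L_{n+1}$ reconstructs to an analysis with $\cost\le B(1)-1$, hence with $\cost=B(1)-1$ exactly, and with $\nval\le|S|$. Choosing the label of minimum $|S^\star|$ therefore returns a lexicographic optimum.
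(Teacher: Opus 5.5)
Your proof is correct and takes the same route as the paper's sketch. You decompose analyses into maximal constant runs, let $B$ compute the minimum segment count, keep only prefixes that can still reach $B(1)$, and justify superset pruning by the common continuations available at a fixed boundary. Your explicit invariant, the observation that every label in $L_a$ carries $B(1)-B(a)$ segments, and the remark that labels need not record the current tonality make precise what the sketch leaves implicit. This includes its loose claim that every minimum-modulation segmentation is ``represented'': strictly, each is dominated by a surviving label.

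One wording fix. A same-tonality extension would pass the count-only filter, so it is not rejected by the filter. Rather, it can never arise, because merging the two segments would give fewer than $B(1)$ segments. In any case your fallback bound $\cost\le B(1)-1$ already makes it harmless.
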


\begin{proof}[Proof sketch]
Any valid path can be decomposed into maximal constant-tonality segments, and
any segment $(a,b,Q)$ generated by the first phase can be assigned any tonality
$t\in Q$ to obtain a valid constant-tonality interval.  Therefore minimizing
$\cost$ is equivalent to minimizing the number of such segments, minus one.
The backward dynamic program $B$ computes the minimum number of segments needed
from every position to the end.  The forward phase only extends partial
segmentations that can still reach the global optimum $B(1)$, so every complete
label in $L_{n+1}$ has minimum modulation cost, and every minimum-modulation
segmentation is represented.  The pruning step is safe because, at a fixed
boundary, a label whose used-tonality set is a superset of another can never
lead to a smaller final used set under any common continuation.  Choosing a
complete label with minimum $|S|$ therefore minimizes $\nval$ among all
minimum-$\cost$ paths.
\end{proof}

\section{Baseline Corpus Evaluation}

We first evaluated the three methods on the chord-symbol sequences from the
LMD Chords corpus~\citep{holloway2025lmdchords}.  This corpus contains chord
sequences extracted from selected Lakh MIDI Dataset files~\citep{raffel2016}
using the Chordino method~\citep{mauch2010}; in the present experiment we use
only the ordered chord labels, not timing or metadata.  Candidate tonalities
were held constant across the three methods, so the comparison isolates the
effect of the optimization objective.

\begin{center}
\begin{tabular}{lr}
\toprule
Quantity & Value \\
\midrule
Sequences read & 31,032 \\
Sequences analyzed & 31,017 \\
Chords analyzed & 2,499,035 \\
Unique chord names & 192 \\
Skipped empty sequences & 15 \\
Skipped sequences with no candidates & 0 \\
\bottomrule
\end{tabular}
\end{center}

\subsection{Aggregate results}

\begin{center}
\begin{tabular}{lrrrr}
\toprule
Method & Mean $\nval$ & Median $\nval$ & Mean $\cost$ & Median $\cost$ \\
\midrule
Transition-only DP & 4.829 & 4.000 & 16.728 & 12.000 \\
Hitting-set NValue & 3.123 & 3.000 & 33.274 & 26.000 \\
Tonal parsimony & 3.802 & 3.000 & 16.728 & 12.000 \\
\bottomrule
\end{tabular}
\end{center}

The hitting-set method obtains the smallest tonal vocabulary, as expected, but
nearly doubles the mean number of transitions.  Tonal parsimony preserves the
transition count of the former dynamic program while reducing the number of
distinct tonalities.

\subsection{Pairwise comparison against transition-only DP}

\begin{center}
\begin{tabular}{lrr}
\toprule
Statistic & Hitting-set NValue & Tonal parsimony \\
\midrule
Different paths & 85.1\% & 81.6\% \\
Fewer tonalities & 66.5\% & 55.8\% \\
Total $\nval$ reduction & 52,912 & 31,853 \\
Mean $\nval$ reduction & 1.706 & 1.027 \\
Median $\nval$ reduction & 1.000 & 1.000 \\
Max $\nval$ reduction & 13 & 9 \\
Same transition count & 26.7\% & 100.0\% \\
Higher transition count & 22,744 sequences & 0 sequences \\
\bottomrule
\end{tabular}
\end{center}

The key result is that tonal parsimony weakly dominates the transition-only
dynamic program in the $(\cost,\nval)$ plane on this corpus: it never increases
$\cost$, and it reduces $\nval$ on more than half the sequences.

\subsection{CPU timing}

\begin{center}
\begin{tabular}{lrrr}
\toprule
Method & Total CPU & Mean per sequence & Median per sequence \\
\midrule
Transition-only DP & 9.291s & 0.300 ms & 0.289 ms \\
Hitting-set NValue & 4.348s & 0.140 ms & 0.086 ms \\
Tonal parsimony & 54.780s & 1.766 ms & 0.562 ms \\
\bottomrule
\end{tabular}
\end{center}

Tonal parsimony is slower than the transition-only DP, but remains practical:
the full corpus of approximately 2.5 million chord positions is processed in
under a minute of process CPU time on the test machine.  The hitting-set method
is faster than the former DP in this corpus because it operates on the small
set of distinct chord domains and the fixed 24-tonality universe.

\section{Chord Substitutions}

\subsection{Musical motivation}

The baseline model treats chord-to-tonality compatibility literally: a chord is
compatible with a tonality when its pitches belong to the corresponding scale.
This pitch-containment relation captures diatonic compatibility but fails on
common jazz and popular chromatic functions.  In such repertoires, a
non-diatonic chord may have a clear functional interpretation inside the
current tonality.  For example, \texttt{Bb7} before \texttt{Amaj7} is heard as
a tritone substitute for \texttt{E7}, the ordinary dominant of A.  If this
interpretation is not
available, an analyzer may be forced to introduce an unrelated local tonality
for \texttt{Bb7}, even though the musical function of the chord is to keep the
progression directed toward A.

We therefore enrich the compatibility relation with a weighted catalogue of
non-diatonic interpretations.  We exclude diatonic same-function substitutions
such as tonic-family exchange: those do not change the set of compatible tonal
centers and are not useful for the present tonality-assignment problem.  The
substitutions modeled here allow a chromatic chord to remain attached to a
plausible local tonic.
Jazz substitution rules have also been approached from a formal algebraic
perspective, notably by \citet{cataldo2018}.

Formally, each chord position can be associated not only with a set of
tonalities, but with a set of interpretations
\[
  (t,r,p),
\]
where $t$ is a candidate tonality, $r$ is an interpretation rule, and $p$ is a
substitution penalty.  The three optimization methods still receive ordinary
candidate domains, obtained by projecting these interpretations onto their
tonalities.  After a path has been selected, the chosen tonality at each
position can be explained a posteriori by the lowest-penalty interpretation
that supports it.  In the experiments below, the substitution penalty is used
for posterior reporting, not as an additional optimization criterion.

Some substitutions have an intrinsically recursive, or fixed-point, character:
after replacing a chord by a substitute, the resulting chord may itself be
interpreted functionally inside a larger tonality.  For example,
\texttt{F\#7} can be understood as the tritone substitute of \texttt{C7}, and
\texttt{C7} can then be understood as $V/IV$ in C major.  Rather than running a
general rule engine to a fixed point during optimization, we compute the
musically relevant finite closure in advance and expose it as a flat
chord-to-tonality compatibility relation.  The optimizer therefore sees only a
standard finite-domain problem, while the posterior explanation records the
compound rule that justified the added candidate.

\subsection{Rule set}

The evaluated rule set is explicit and finite, but covers the main
non-diatonic functional reinterpretations used in the experiments:

For reproducibility, the implementation treats substitution rules as named
rule sets rather than as a single universal catalogue.  The corpus tables below
use the broad \texttt{full} rule set, while source-specific validations can
select smaller or different sets, for example Coker-style rules or the
expert-jazz-progressions preset.

\begin{center}
\scriptsize
\begin{tabular}{lllr}
\toprule
Rule & Pattern & Example & Penalty \\
\midrule
Tritone substitute & $\flat II^7 \to I$ & \texttt{Bb7} $\to$ A & 0.75 \\
Backdoor dominant & $\flat VII^7 \to I$ & \texttt{G7} $\to$ A & 1.00 \\
Secondary dominant & $V^7/x$ in major & \texttt{D7} as $V/V$ in C & 1.00 \\
Tritone secondary dominant & $\mathrm{sub}V^7/x$ in major &
\texttt{F\#7} as $\mathrm{sub}V/IV$ in C & 1.25 \\
Tonic dominant color & $I^7$ as blues tonic & \texttt{C7} in C & 0.50 \\
Blues subdominant & $IV^7$ or $IV^{7sus}$ & \texttt{F7} in C & 1.00 \\
Altered primary dominant & $V^{7alt}$ or $V^{7\flat9} \to I$ &
\texttt{G7b9} in C & 1.00 \\
Suspended tonic dominant & $I^{7sus}$ & \texttt{F7sus4} in F & 0.75 \\
Suspended primary dominant & $V^{7sus} \to I$ & \texttt{Eb7sus4} in Ab & 0.75 \\
Lydian tonic color & $I^{maj7\#11}$ & \texttt{Dmaj7\#11} in D & 0.50 \\
Augmented tonic color & $I^+$ & \texttt{C+} in C & 0.50 \\
Minor tonic color & $i$ or $i^7$ & \texttt{Fm} in F minor & 0.50 \\
Minor supertonic & $ii\emptyset$ in harmonic minor &
\texttt{Gm7b5} in F minor & 0.75 \\
Major supertonic & borrowed $ii\emptyset$ in major &
\texttt{Dm7b5} in C & 0.75 \\
Leading-tone diminished & $vii^\circ \to I$ & \texttt{G\#dim} $\to$ A & 0.75 \\
Borrowed minor iv & $iv^m \to I$ & \texttt{Dm} $\to$ A & 1.00 \\
\bottomrule
\end{tabular}
\end{center}

The first, second, and leading-tone rules add major or harmonic-minor tonic
targets.  Secondary-dominant rules add the enclosing major key: for instance,
\texttt{F\#7} can stay in C major as the tritone substitute of \texttt{C7},
which is $V/IV$ in C.  These secondary rules are examples of the precomputed
closure described above.

The rule weights express how exceptional the interpretation is.  They are
used to explain the resulting analysis and could later be added as a secondary
or tertiary optimization criterion.  The present evaluation keeps the
three objectives unchanged, in order to isolate the effect of enlarging the
candidate domains.

\subsection{Corpus results with substitutions}

On the full corpus, the substitution rules expanded 96 unique chord names,
affecting 913,570 chord positions and adding 3,334,200 candidate tonalities.
The table below recomputes the three methods with the enriched compatibility
relation.

\begin{center}
\begin{tabular}{lrrrr}
\toprule
Method & Mean $\nval$ & Median $\nval$ & Mean $\cost$ & Median $\cost$ \\
\midrule
Transition-only DP & 4.320 & 4.000 & 12.141 & 8.000 \\
Hitting-set NValue & 2.532 & 2.000 & 30.542 & 24.000 \\
Tonal parsimony & 3.206 & 3.000 & 12.141 & 8.000 \\
\bottomrule
\end{tabular}
\end{center}

\begin{figure}[t]
\centering
\includegraphics[width=0.95\linewidth]{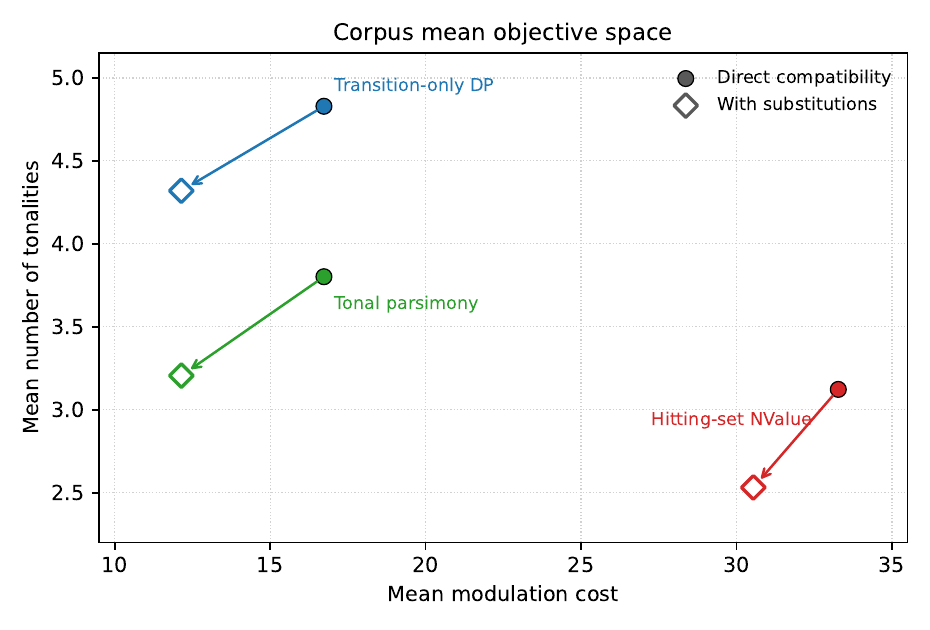}
\caption{Corpus mean objective space for the three methods.  Filled circles
show direct chord-scale compatibility, open diamonds show the same methods
with the substitution closure, and arrows indicate the movement induced by the
substitution layer.  Lower-left is better: lower modulation cost and fewer
distinct tonalities.}
\label{fig:objective-space}
\end{figure}

Compared with the direct-compatibility baseline, substitutions change many
paths and generally reduce the number of tonalities.  For tonal parsimony, the
mean number of tonalities decreases from 3.802 to 3.206, and the mean
transition count decreases from 16.728 to 12.141.

\begin{center}
\scriptsize
\begin{tabular}{lrrrr}
\toprule
Method & Changed paths & Mean $\Delta\nval$ & Mean $\Delta\cost$ &
Selected substitutions \\
\midrule
Transition-only DP & 23,597 / 31,017 & -0.509 & -4.587 & 173,646 \\
Hitting-set NValue & 23,494 / 31,017 & -0.591 & -2.732 & 202,553 \\
Tonal parsimony & 22,147 / 31,017 & -0.596 & -4.587 & 167,004 \\
\bottomrule
\end{tabular}
\end{center}

The selected substitution counts for tonal parsimony are:

\begin{center}
\begin{tabular}{lr}
\toprule
Rule & Selected positions \\
\midrule
Secondary dominant & 64,648 \\
Borrowed minor iv & 41,166 \\
Tonic dominant color & 18,583 \\
Major supertonic & 10,456 \\
Blues subdominant & 6,117 \\
Backdoor dominant & 6,114 \\
Augmented tonic color & 5,585 \\
Minor tonic color & 5,548 \\
Tritone secondary dominant & 4,874 \\
Tritone substitute & 2,524 \\
Leading-tone diminished & 1,389 \\
\bottomrule
\end{tabular}
\end{center}

\section{Professional Chord-Scale Validation}
\label{sec:expert-jazz-validation}

The large corpus experiments above evaluate the objectives themselves.  We
also ran a full external validation against professionally analyzed jazz
standards from the \emph{Jazz Standards Progressions Book}
corpus~\citep{jazzstandardsbook}.  This source provides two annotations for each
chord: a hierarchical Roman-function
analysis and a local chord-scale label.  These are related but not identical
tasks.  The Roman layer may keep a secondary ii--V inside a global key, while
the chord-scale layer tells an improviser which local scale to play.

The validation covers the five C-instrument volumes: 1,555 tune analyses and
52,643 chord positions.  The evaluation compares the local chord-scale label
induced by each selected tonality with the printed chord-scale annotation.
The conversion from tonality path to chord-scale label is deterministic.  For
each chord position, the optimizer first selects a tonality, and the
substitution layer supplies the lowest-penalty interpretation that supports
that tonality.  The chord suffix and scale degree then determine the reported
local scale: major and sixth chords yield Ionian or Lydian colors; minor
seventh chords yield Dorian, Aeolian, or Phrygian colors according to their
degree in the selected major tonality; minor-sixth chords yield melodic minor;
half-diminished chords yield Locrian; diminished chords yield diminished;
plain dominants yield Mixolydian; and explicit dominant colors map to their
corresponding labels, e.g. \texttt{7b9} to Mixolydian flat-nine
flat-thirteen, \texttt{7\#9} to major blues, \texttt{7\#11} to Lydian
dominant, and \texttt{7\#5}/\texttt{7alt} to altered.  Substitution rules
therefore change the selected tonal center, but the final scale label is still
computed from the selected tonality and the observed chord suffix.

We report three increasingly permissive scores:

\begin{itemize}
\item \emph{Strict}: exact agreement after harmless spelling aliases such as
\texttt{Ion} and \texttt{Ionian}.
\item \emph{Compatible}: strict agreement plus conservative, chord-aware
ambiguities.  For example, a major seventh chord may admit Ionian or Lydian
color, a minor seventh chord may admit Dorian, Aeolian, or Phrygian color, and
a plain dominant may admit several dominant-scale colors.  Explicit altered
suffixes are not freely merged at this level.
\item \emph{Broad}: an upper bound that also accepts same-family color
substitutions even when the chord suffix is more explicit, such as altered
dominant variants.  This score diagnoses residual same-family disagreements
but is too permissive to be the main validation criterion.
\end{itemize}

\begin{center}
\scriptsize
\begin{tabular}{lrrrrrr}
\toprule
Method & Strict & Compatible & Broad & Exact comp. tunes & Mean $\nval$ & Mean $\cost$ \\
\midrule
Transition-only DP & 0.798 & 0.943 & 0.995 & 600/1555 & 3.354 & 4.571 \\
Hitting-set NValue & 0.774 & 0.943 & 0.995 & 714/1555 & 2.329 & 10.032 \\
Tonal parsimony & 0.811 & 0.956 & 0.995 & 767/1555 & 2.655 & 4.571 \\
\bottomrule
\end{tabular}
\end{center}

All three methods reach the same broad upper bound, because many of the
remaining disagreements are same-family color choices.  The conservative
compatible score is more informative: tonal parsimony improves from 94.3\% to
95.6\% agreement, and outperforms both other methods on 100 tunes.
The residual hard-disagreement count is small and shared by all methods
(242 events, plus 8 out-of-model labels), suggesting that most remaining
disagreements are either source-label alignment issues or scale categories
outside the 24-tonality model, rather than optimization failures.

Two examples illustrate the difference.  In \emph{The Gift (Recado Bossa
Nova)}, tonal parsimony obtains perfect compatible chord-scale agreement while
also matching the best value of each objective: it uses the two tonalities of
the pure NValue solution but keeps the four transitions of the former DP.  In
\emph{Nuages}, tonal parsimony again reaches perfect compatible agreement,
with the same transition count as the DP and the same number of tonalities as
the pure NValue method.

\begin{center}
\scriptsize
\begin{tabular}{llrrrr}
\toprule
Tune & Method & Compatible & Strict & $\nval$ & $\cost$ \\
\midrule
The Gift & Transition-only DP & 0.909 & 0.879 & 3 & 4 \\
The Gift & Hitting-set NValue & 0.727 & 0.727 & 2 & 14 \\
The Gift & Tonal parsimony & 1.000 & 0.909 & 2 & 4 \\
\midrule
Nuages & Transition-only DP & 0.911 & 0.889 & 5 & 10 \\
Nuages & Hitting-set NValue & 0.911 & 0.822 & 3 & 28 \\
Nuages & Tonal parsimony & 1.000 & 0.844 & 3 & 10 \\
\bottomrule
\end{tabular}
\end{center}

These cases are representative of the reason tonal parsimony matters.  The
former DP often keeps transitions low but carries extra tonalities.  The pure
NValue method compresses tonal vocabulary but may jump frequently.  The
lexicographic method keeps the smooth path while avoiding unnecessary tonal
centers, which in turn reduces broad-only chord-scale color disagreements such
as minor chords being interpreted as melodic-minor tonic color when the printed
annotation expects a contextual Dorian, Aeolian, or Phrygian color.

\section{Coker-Style Functional Validation}
\label{sec:coker-validation}

The corpus experiment measures aggregate behavior, but it does not by itself
say whether the additional compatibility rules correspond to ordinary
jazz-harmonic expectations.  We therefore built a targeted validation suite,
CokerBench, made of 19 short Coker-style functional examples.  The
suite is not used for training or parameter fitting.  It is a regression test
for the chord-to-tonality model and for the three optimization objectives.
Each example specifies an expected number of tonalities, and when the intended
analysis is unambiguous it also specifies an expected sequence of tonic roots.

\subsection{Test families}

The examples cover six kinds of harmonic behavior:

\begin{center}
\scriptsize
\begin{tabular}{llp{0.52\linewidth}}
\toprule
Family & Cases & What is tested \\
\midrule
Basic tonal progressions & 1, 6 &
Major and minor ii--V--I behavior, including altered minor dominants. \\
Single substitutions & 2--5 &
Secondary dominants, tritone substitutes, backdoor dominants, and the
precomputed closure needed for tritone secondary dominants. \\
Combined substitutions & 12 &
Backdoor and tritone substitutions inside one C-centered progression. \\
Jazz forms & 8--11 &
Rhythm-changes A section, rhythm-changes bridge, traditional blues, and a
Bird-blues-like dominant cycle. \\
Dominant and modal colors & 15--17, 19 &
Whole-tone or altered dominant color, diminished-dominant color, sus dominant
regions, and explicit Lydian tonic colors. \\
Objective and ambiguity stress tests & 7, 13, 14, 18 &
Relative-major/minor ambiguity, Coker-inspired chromatic progressions, and
anti-compression cases where the expected analysis should keep several local
tonics. \\
\bottomrule
\end{tabular}
\end{center}

The validation serves a different role from the large corpus evaluation.  In
the corpus, there is no ground-truth analysis, so we compare objectives.  In
CokerBench, the goal is to check whether the analysis remains musically
plausible on named functional situations.  It tests the substitution layer
directly: without substitutions, all three methods solve only 5 of the 18 cases
for which an expected $\nval$ is specified and only 2 of the 15 cases with an
expected root analysis.

\subsection{Results}

With substitution-enriched domains, the three methods solve the same number of
validation cases, but they do not become identical objectives.

\begin{center}
\scriptsize
\begin{tabular}{llrrrr}
\toprule
Compatibility & Method & $\nval$ ok & Root analysis ok & Mean $\nval$ & Mean $\cost$ \\
\midrule
Direct & Transition-only DP & 5/18 & 2/15 & 3.053 & 3.000 \\
Direct & Hitting-set NValue & 5/18 & 2/15 & 2.789 & 3.263 \\
Direct & Tonal parsimony & 5/18 & 2/15 & 2.842 & 3.000 \\
\midrule
Substitutions & Transition-only DP & 16/18 & 13/15 & 1.526 & 0.579 \\
Substitutions & Hitting-set NValue & 16/18 & 13/15 & 1.421 & 0.895 \\
Substitutions & Tonal parsimony & 16/18 & 13/15 & 1.421 & 0.579 \\
\bottomrule
\end{tabular}
\end{center}

Most CokerBench examples become easy once the correct functional
interpretations are available.  The three methods choose identical paths on 16
of the 19 cases and have identical $(\cost,\nval)$ values on 17 cases.  This
is not a weakness of the validation suite: it shows that many apparent
optimization disagreements in literal chord-scale analysis were actually
missing-domain problems.  Once a \texttt{Db7} can remain in C as a tritone
substitute, or \texttt{F7} can remain in C as a blues subdominant, the intended
analysis often has both low $\cost$ and low $\nval$.

There remains one clear three-way objective example, the Two-Bop-like
progression
\[
\texttt{Abmaj7 | Am7 | Bbm7 | C7 | Fm7 | Bb7 | Ebm7 | Ab7 | Dbmaj7 | Gb7 | Em7 | A7}.
\]
The results are:

\begin{center}
\scriptsize
\begin{tabular}{lrrp{0.58\linewidth}}
\toprule
Method & $\cost$ & $\nval$ & Analysis path \\
\midrule
Transition-only DP & 3 & 4 &
D\#maj; A h.min.; C\#maj x8; Cmaj x2 \\
Hitting-set NValue & 9 & 3 &
D\#maj; Cmaj; C\#maj; D\#maj x3; C\#maj; D\#maj; C\#maj; D\#maj; Cmaj; D\#maj \\
Tonal parsimony & 3 & 3 &
D\#maj; Cmaj; C\#maj x7; Cmaj x3 \\
\bottomrule
\end{tabular}
\end{center}

This is the intended behavior of the proposed method.  The former DP keeps the
lowest transition count but uses four tonalities.  The hitting-set method uses
three tonalities but introduces nine transitions.  Tonal parsimony obtains the
same three-tonality vocabulary as the hitting-set method while preserving the
transition count of the dynamic program.

\subsection{Remaining misses}

The two remaining tonal-parsimony failures are musically informative.

The first is the relative-major excursion
\[
\texttt{Fm | Bbm7 | Eb7 | Abmaj7 | Gm7b5 | C7alt | Fm}.
\]
The expected analysis is a two-region reading: F minor around the opening and
closing ii--V--i material, with a brief Ab-major region at \texttt{Abmaj7}.
The substitution-enriched tonal-parsimony analysis instead assigns the whole
sequence to \texttt{G\#maj}, enharmonically Ab major:

\begin{center}
\scriptsize
\begin{tabular}{lrrp{0.50\linewidth}}
\toprule
Reading & $\cost$ & $\nval$ & Path \\
\midrule
Expected & 2 & 2 & F minor x3; Ab major; F minor x3 \\
Tonal parsimony & 0 & 1 & G\#maj x7 \\
\bottomrule
\end{tabular}
\end{center}

This is a limitation of the objective rather than a missing substitution.  The
relative major covers much of the same pitch material, and the current
criterion prefers the zero-transition one-tonality explanation.  A more refined
model would need local functional evidence or a tonic-minor preference strong
enough to distinguish ``contained in Ab major'' from ``functioning in F
minor''.

The second miss is an anti-compression example:
\[
\texttt{Cmaj7 | Ebmaj7 | Abmaj7 | Bmaj7 | Gmaj7}.
\]
Here the intended analysis uses five local major tonics.  The current direct
compatibility relation allows a major-seventh chord to be interpreted not only
as tonic, but also as a diatonic chord in another major scale.  Tonal parsimony
therefore compresses the example to three tonalities:

\begin{center}
\scriptsize
\begin{tabular}{lrrp{0.50\linewidth}}
\toprule
Reading & $\cost$ & $\nval$ & Path \\
\midrule
Expected & 4 & 5 & Cmaj; Ebmaj; Abmaj; Bmaj; Gmaj \\
Tonal parsimony & 3 & 3 & Gmaj; D\#maj x2; F\#maj; Gmaj \\
\bottomrule
\end{tabular}
\end{center}

The related Lydian-color case with explicit \texttt{maj7\#11} labels is solved
correctly by the new rule layer.  This plain-\texttt{maj7} failure shows that
some anti-compression examples require weighted local evidence in addition to
more substitutions: a bare \texttt{Cmaj7} is pitch-compatible with G major, but
in many jazz contexts it may still be intended as a C tonic sonority.

\section{Illustrative Cases}
\label{sec:illustrative-cases}

The examples below show why both ingredients matter: the lexicographic
objective and the substitution layer.  Paths are shown in run-length form: for
example, \texttt{Cmaj x4} means that C major is
assigned to four consecutive chord positions.  We abbreviate harmonic minor as
\texttt{h.min.}.

\subsection{A compact six-way example}

The C-centered progression
\[
\texttt{Cmaj7 | F\#7 | F | Bb7 | Em | Am | Dm | Db7}
\]
isolates the substitution layer: \texttt{Bb7} can be heard as a backdoor
dominant into C, and \texttt{Db7} as a tritone substitute resolving to C.  The
less obvious chord is \texttt{F\#7}: it can be heard as the tritone
substitute of \texttt{C7}, which is itself a secondary dominant of IV in C.
In the present implementation, substitutions reduce tonal parsimony from
$(\cost,\nval)=(5,4)$ to $(0,1)$ on this progression.  The three methods tie
after substitutions, however, so the next excerpt is better for comparing the
objectives themselves.

The following short excerpt is enough to show the three objective behaviors
and the effect of substitution-enriched domains:
\[
\texttt{Dm | Em | F\#m | Bm | Gm6 | D | Bm | C\#m7}.
\]
The column ``Subs.'' reports how many selected chord positions are explained
by a substitution rule in the posterior interpretation.

\begin{center}
\scriptsize
\begin{tabular}{llrrrp{0.42\linewidth}}
\toprule
Compatibility & Method & $\cost$ & $\nval$ & Subs. & Analysis path \\
\midrule
Direct & Transition-only DP & 3 & 4 & 0 &
Cmaj x2; Dmaj x2; Fmaj; Amaj x3 \\
Direct & Hitting-set NValue & 5 & 3 & 0 &
Fmaj; G\# h.min.; F\# h.min. x2; Fmaj; F\# h.min. x2; G\# h.min. \\
Direct & Tonal parsimony & 3 & 3 & 0 &
Cmaj x2; Amaj; B h.min. x2; Amaj x3 \\
Substitutions & Transition-only DP & 2 & 3 & 1 &
Cmaj x2; Dmaj x5; Emaj \\
Substitutions & Hitting-set NValue & 4 & 2 & 2 &
Amaj; Dmaj; Amaj x2; Dmaj; Amaj x3 \\
Substitutions & Tonal parsimony & 2 & 2 & 2 &
Amaj; Dmaj x4; Amaj x3 \\
\bottomrule
\end{tabular}
\end{center}

The final row is the desired behavior.  Direct tonal parsimony keeps the
minimum transition count of the former dynamic program but uses one fewer
tonality.  Pure NValue minimization also uses fewer tonalities, but it pays two
extra transitions.  With substitutions, the lexicographic method keeps the best
transition count among the substitution-aware analyses and reaches the
two-tonality vocabulary of the pure NValue solution.  In this local
$(\cost,\nval)$ comparison, it sits at the useful corner: the same modulation
cost as transition-only DP, but the same tonal-vocabulary size as NValue.

The posterior explanations are simple: \texttt{Dm} is heard as borrowed minor
iv in A major, and \texttt{Gm6} as borrowed minor iv in D major.  These two
interpretations turn the path into a readable alternation between A major and
D major, instead of forcing a local F-major detour.

\subsection{A transition-only tie resolved by tonal parsimony}

One corpus slice exhibits the typical benefit of the lexicographic method:
\[
\begin{array}{l}
\texttt{C | Cm7 | Fm7 | Gmaj7 | Cm | Bmaj7 | C | Cm7 | Fm7 | Gmaj7 |} \\
\texttt{Cm | Bmaj7 | C | G | F | C | F | C | G | C | Gmaj7 | C | Cm7 | Fm7}
\end{array}
\]

\begin{center}
\scriptsize
\begin{tabular}{lrrp{0.58\linewidth}}
\toprule
Method & $\cost$ & $\nval$ & Analysis path \\
\midrule
Transition-only DP & 12 & 5 &
Cmaj; D\#maj x2; Dmaj; D\#maj; F\#maj; Cmaj; D\#maj x2; Dmaj;
D\#maj; F\#maj; Cmaj x8; Gmaj x2; D\#maj x2 \\
Hitting-set NValue & 13 & 4 &
Cmaj; D\#maj x2; Gmaj; D\#maj; F\#maj; Cmaj; D\#maj x2; Gmaj;
D\#maj; F\#maj; Cmaj x8; Gmaj; Cmaj; D\#maj x2 \\
Tonal parsimony & 12 & 4 &
Gmaj; D\#maj x2; Gmaj; D\#maj; F\#maj; Gmaj; D\#maj x2; Gmaj;
D\#maj; F\#maj; Cmaj x5; Gmaj x5; D\#maj x2 \\
\bottomrule
\end{tabular}
\end{center}

The transition-only DP and tonal-parsimony analyses both have $\cost=12$, but
the transition-only path uses five tonalities whereas tonal parsimony uses
four.  The important point is visible in the path: the transition-only optimum
contains an avoidable D major region around the \texttt{Gmaj7} chords, whereas
tonal parsimony can choose G major there without increasing the modulation
count.  The hitting-set method also uses four tonalities, but pays one
additional modulation.

Additional corpus examples show the same trade-off at larger scale: pure
NValue can compress the tonal vocabulary further, but it often does so by
adding transitions.  We omit those longer paths here because the aggregate
tables and the professional validation already quantify the effect.

\section{Discussion}

The results show that the transition-only objective is under-specified: many
minimum-modulation paths exist, and they can differ in the number of tonalities
they introduce.  Adding a secondary NValue criterion resolves this ambiguity
toward smaller tonal vocabularies.

This is the musical reason for minimizing $\nval$.  The value $\nval$ measures
the size of the tonal vocabulary required by an analysis.  Two analyses may
have identical modulation cost, but one may require a larger set of tonal
centers to explain the same progression.  Minimizing $\nval$ favors analyses
that reuse previously established tonal material and therefore produce more
compact scale maps for improvisation, arrangement, and composition.

Pure NValue minimization is informative as a boundary case.
It shows how far the tonal vocabulary can be compressed, while demonstrating
why transition cost cannot be ignored.  The method often produces analyses
with many abrupt changes.

The proposed tonal-parsimony objective combines the established
modulation-minimization principle with a global notion of parsimony.

The substitution experiment shows that the chord-to-tonality relation itself is
also musically important.  A purely literal relation can force the optimizer to
explain conventional chromatic chords by introducing local tonalities that are
pitch-compatible but functionally implausible.  Weighted substitution
interpretations address this without changing the three optimization methods:
they enrich the candidate domains and provide a posterior explanation of which
chromatic events were used in the selected path.

\section{Generating Under a Fixed Tonality Budget}
\label{sec:generation-budget}

The purpose of this section is to demonstrate that the optimized value $\nval$
behaves as an interpretable measure of harmonic complexity on generated
material.  It does not solve the harder problem of generating directly under a
prescribed optimized $\nval$.

The analysis problem studied here is already combinatorial, but it remains
tractable because the tonality universe is fixed and small.  A more general
generative problem is harder: generate or transform chord sequences such that
the minimized analysis uses a prescribed number $\nval$ of tonalities.  This is
not a local constraint on chords.  The value of $\nval$ is a global property of
the optimal analysis after minimization.  Changing one chord can alter the
candidate domains, the minimum modulation structure, and the set of tonalities
selected by the optimizer.

The motivation is practical.  In an interactive generation setting, a user may
want a progression that is ``simple'', ``moderately rich'', or ``more
adventurous'' without having to specify explicit modulations, chord functions,
or substitution rules.  Although harmonic analysis itself requires musical
knowledge, the optimized value $\nval$ induces a compact and audible notion of
complexity: $\nval=1$ means that the whole sequence can be heard through one
tonal center, while larger values require a larger tonal vocabulary to explain
the same surface chords.  This makes $\nval$ an operational control parameter for
adaptive generation, where the system can match the amount of harmonic
movement to the user's intention instead of sampling progressions at random.

We evaluate this generate-then-cluster protocol by training a variable-order
chord generator on the first 1,500 extracted mDecks chord sequences and
generating 1,000 eight-chord continuations with two local constraints: the first
and last generated chords had to be unslashed major triads or major-seventh
chords.  Generation used a max-order 4 variable-order Markov model with exact
positional constraints, using sparse context-state belief propagation for
regular-constrained generation~\citep{pachet2026regular}; the resulting
sequences were then analyzed with the substitution-enriched tonal parsimony
model.  In this baseline, $\nval$ is assigned after analysis rather than
enforced during generation.

\begin{center}
\scriptsize
\begin{tabular}{rrrr}
\toprule
$\nval$ & Samples & Share & Unique sequences \\
\midrule
1 & 226 & 22.6\% & 213 \\
2 & 542 & 54.2\% & 525 \\
3 & 214 & 21.4\% & 206 \\
4 & 17 & 1.7\% & 17 \\
5 & 1 & 0.1\% & 1 \\
\bottomrule
\end{tabular}
\end{center}

All 1,000 generated sequences were analyzable; 962 were distinct.  The mean
post-analysis value was $\nval=2.025$, with mean transition cost
$\cost=1.276$.  The distribution is not concentrated on a single value:
$\nval=1$, $\nval=2$, and $\nval=3$ account respectively for 22.6\%, 54.2\%,
and 21.4\% of the generated continuations.  This suggests that tonal vocabulary
size behaves as a genuine structural property of generated harmony.  The
following examples are selected from the same run, choosing low-substitution,
low-slash representatives within each cluster when possible.

\begin{center}
\scriptsize
\begin{tabular}{rrp{0.47\linewidth}p{0.25\linewidth}}
\toprule
$\nval$ & $\cost$ & Chord sequence & Tonal-parsimony path \\
\midrule
1 & 0 &
Cmaj7; Am7; Dm7; G7; Cmaj7; Am7; Dm7; Fmaj7 &
C major x8 \\
2 & 1 &
G; D; G; A; G; Am7; D7; Gmaj7 &
D major x4; G major x4 \\
3 & 2 &
Cmaj7; Am7; Dm7; G7; Cmaj7; Gm7; Fmaj7; Ebmaj7 &
C major x5; F major x2; A\# major \\
4 & 3 &
Cmaj7; Fm7; Bb7; Ebmaj7; Abm7; Db7; Gbmaj7; Amaj7 &
C major; D\# major x3; F\# major x3; A major \\
5 & 4 &
Bbmaj7; Gbmaj7; Db7\#11; Gm7; Gbm7; Fm7; Ebmaj7; Emaj7 &
F major; C\# major; D major x3; D\# major x2; E major \\
\bottomrule
\end{tabular}
\end{center}

Thus, generation with a target post-analysis $\nval$ combines sequence
generation with an embedded global optimization problem.  It is closer to
constraint-based generation with an oracle objective than to ordinary local
Markov or dynamic-programming generation.  It defines a separate bilevel
constrained-generation problem.

\section{Conclusion}

We introduced and compared three methods for assigning tonalities to chord
sequences.  The former transition-only dynamic program is efficient and
minimizes modulations, but it does not control the number of tonalities used.
The pure hitting-set method minimizes tonal vocabulary but can produce many
modulations.  The proposed tonal-parsimony method minimizes modulations first
and tonal vocabulary second.  This joint objective is much harder
combinatorially than the classical transition-only dynamic program: the
optimizer must choose a smooth path while also minimizing the set of distinct
tonalities appearing anywhere in that path.  The practical contribution is that
the hard part is parameterized by the number of possible tonalities, which is
fixed and small in major/minor tonal music.  We also introduced a substitution
layer for non-diatonic jazz interpretations, covering tritone substitutes,
backdoor and secondary dominants, tonic colors, borrowed chords, and related
functional reinterpretations as weighted candidate-domain enrichments.

On the LMD Chords corpus of 31,032 chord sequences~\citep{holloway2025lmdchords},
tonal parsimony preserves the transition optimum of the former dynamic program
in all analyzed cases and reduces the number of tonalities in 55.8\% of them.
Although the method is exponential in the number of tonalities in the general
case, the fixed 24-tonality setting of major/minor tonal music makes it
efficient in practice.
With substitution-enriched domains, tonal parsimony changes 71.4\% of the
analyzed paths, lowers the mean number of tonalities from 3.802 to 3.206, and
lowers the mean number of modulations from 16.728 to 12.141.  Musically, it
produces more realistic analyses that reconcile the various optimization
criteria in a single global optimization procedure.  The
substitution rules are best understood as a finite closure of repeated
functional interpretations, flattened into ordinary chord-to-tonality
candidates before optimization.  Thus substitutions do not require a separate
post-hoc repair procedure: they are part of the same finite-domain model and
are optimized at scale.

The generation experiment further suggests that $\nval$ behaves as a
nontrivial structural descriptor of harmonic complexity: it is not concentrated
on a single value, but separates generated continuations into distinct
tonal-vocabulary regimes.  This makes $\nval$ useful as a post-analysis control
variable, while enforcing a target $\nval$ during generation remains a separate
bilevel constrained-generation problem.

The external validation strengthens this conclusion on a corpus of
professional chord-scale analyses.  On 1,555 tunes, tonal parsimony has the
best strict chord-scale agreement (81.1\%) and the best conservative
compatible agreement (95.6\%), compared with 94.3\% compatible agreement for
both baselines.  It outperforms both baselines on 100 individual tunes; in
representative cases such as \emph{The Gift} and \emph{Nuages}, it
combines the low transition count of the DP with the small tonal vocabulary of
the NValue solution.  Beyond improving its own objective on corpus statistics,
the method better predicts professionally printed local chord-scale annotations
under a conservative, musically meaningful success criterion.  The resulting
picture is that minimizing both transition cost and tonal vocabulary is
combinatorially difficult in general, but tractable for the tonal universe
relevant to jazz harmony, and extends to professional-level substitutional
analyses over large corpora.

\bibliographystyle{tMAM}
\bibliography{TonalParsimonyAuthorBibTexDatabase}

@book{rossi2006,
  editor = {Rossi, Francesca and van Beek, Peter and Walsh, Toby},
  title = {Handbook of Constraint Programming},
  publisher = {Elsevier},
  year = {2006}
}

@article{temperley1999,
  author = {Temperley, David},
  title = {What's Key for Key? The {Krumhansl--Schmuckler} Key-Finding Algorithm Reconsidered},
  journal = {Music Perception},
  volume = {17},
  number = {1},
  pages = {65--100},
  year = {1999},
  doi = {10.2307/40285812},
  url = {https://doi.org/10.2307/40285812}
}

@article{choi2011,
  author = {Choi, Andrew},
  title = {Jazz Harmonic Analysis as Optimal Tonality Segmentation},
  journal = {Computer Music Journal},
  volume = {35},
  number = {2},
  pages = {49--66},
  year = {2011},
  doi = {10.1162/COMJ_a_00056},
  url = {https://doi.org/10.1162/COMJ_a_00056}
}

@article{raphaelstoddard2004,
  author = {Raphael, Christopher and Stoddard, Josh},
  title = {Functional Harmonic Analysis Using Probabilistic Models},
  journal = {Computer Music Journal},
  volume = {28},
  number = {3},
  pages = {45--52},
  year = {2004},
  doi = {10.1162/0148926041790676},
  url = {https://doi.org/10.1162/0148926041790676}
}

@misc{bouquillard2024,
  author = {Bouquillard, Augustin and Jacquemard, Florent},
  title = {Engraving Oriented Joint Estimation of Pitch Spelling and Local and Global Keys},
  howpublished = {arXiv preprint arXiv:2402.10247},
  year = {2024},
  url = {https://arxiv.org/abs/2402.10247}
}

@book{chew2014,
  author = {Chew, Elaine},
  title = {Mathematical and Computational Modeling of Tonality: Theory and Applications},
  publisher = {Springer},
  year = {2014},
  doi = {10.1007/978-1-4614-9475-1},
  url = {https://doi.org/10.1007/978-1-4614-9475-1}
}

@article{tsushima2018,
  author = {Tsushima, Hiroaki and Nakamura, Eita and Itoyama, Katsutoshi and Yoshii, Kazuyoshi},
  title = {Generative Statistical Models with Self-Emergent Grammar of Chord Sequences},
  journal = {Journal of New Music Research},
  volume = {47},
  number = {3},
  pages = {226--248},
  year = {2018},
  doi = {10.1080/09298215.2018.1447584},
  url = {https://doi.org/10.1080/09298215.2018.1447584}
}

@article{steedman1984,
  author = {Steedman, Mark J.},
  title = {A Generative Grammar for Jazz Chord Sequences},
  journal = {Music Perception},
  volume = {2},
  number = {1},
  pages = {52--77},
  year = {1984},
  doi = {10.2307/40285282},
  url = {https://doi.org/10.2307/40285282}
}

@article{cataldo2018,
  author = {Cataldo, Carmine},
  title = {Towards a Music Algebra: Fundamental Harmonic Substitutions in Jazz},
  journal = {International Journal of Advanced Engineering Research and Science},
  volume = {5},
  number = {1},
  pages = {52--57},
  year = {2018},
  doi = {10.22161/ijaers.5.1.9},
  url = {https://doi.org/10.22161/ijaers.5.1.9}
}

@misc{jazzstandardsbook,
  author = {{Jazz Standards Progressions Book}},
  title = {The Jazz Standards Progressions Book: C Instruments},
  howpublished = {Volumes 1--5},
  year = {n.d.},
  note = {Jazz standards chord-progressions and chord-scale annotations}
}

@misc{pachet2026regular,
  author = {Pachet, Fran{\c{c}}ois},
  title = {Exact Regular-Constrained Variable-Order Markov Generation via Sparse Context-State Belief Propagation},
  howpublished = {arXiv preprint arXiv:2605.07839},
  year = {2026},
  url = {https://arxiv.org/abs/2605.07839}
}

@misc{holloway2025lmdchords,
  author = {Holloway, Oliver},
  title = {{lmd\_chords} (Revision 4d6815c)},
  howpublished = {Hugging Face dataset},
  year = {2025},
  doi = {10.57967/hf/4219},
  url = {https://huggingface.co/datasets/ohollo/lmd_chords},
  note = {Hugging Face}
}

@phdthesis{raffel2016,
  author = {Raffel, Colin},
  title = {Learning-Based Methods for Comparing Sequences, with Applications to Audio-to-{MIDI} Alignment and Matching},
  school = {Columbia University},
  year = {2016}
}

@inproceedings{mauch2010,
  author = {Mauch, Matthias and Dixon, Simon},
  title = {Approximate Note Transcription for the Improved Identification of Difficult Chords},
  booktitle = {Proceedings of the 11th International Society for Music Information Retrieval Conference},
  pages = {135--140},
  publisher = {International Society for Music Information Retrieval},
  year = {2010}
}

@incollection{karp1972,
  author = {Karp, Richard M.},
  title = {Reducibility among Combinatorial Problems},
  booktitle = {Complexity of Computer Computations},
  editor = {Miller, Raymond E. and Thatcher, James W.},
  pages = {85--103},
  publisher = {Plenum Press},
  year = {1972},
  doi = {10.1007/978-1-4684-2001-2_9},
  url = {https://doi.org/10.1007/978-1-4684-2001-2_9}
}

@inproceedings{beldiceanu2000,
  author = {Beldiceanu, Nicolas},
  title = {Global Constraints as Graph Properties on a Structured Network of Elementary Constraints of the Same Type},
  booktitle = {Principles and Practice of Constraint Programming -- {CP} 2000, Lecture Notes in Computer Science 1894},
  pages = {52--66},
  publisher = {Springer},
  year = {2000},
  doi = {10.1007/3-540-45349-0},
  url = {https://doi.org/10.1007/3-540-45349-0}
}

@inproceedings{pesant2004,
  author = {Pesant, Gilles},
  title = {A Regular Language Membership Constraint for Finite Sequences of Variables},
  booktitle = {Principles and Practice of Constraint Programming -- {CP} 2004, Lecture Notes in Computer Science 3258},
  pages = {482--495},
  publisher = {Springer},
  year = {2004},
  doi = {10.1007/978-3-540-30201-8_36},
  url = {https://doi.org/10.1007/978-3-540-30201-8_36}
}

@article{demassey2006,
  author = {Demassey, Sophie and Pesant, Gilles and Rousseau, Louis-Martin},
  title = {A Cost-Regular Based Hybrid Column Generation Approach},
  journal = {Constraints},
  volume = {11},
  number = {4},
  pages = {315--333},
  year = {2006},
  doi = {10.1007/s10601-006-9003-7},
  url = {https://doi.org/10.1007/s10601-006-9003-7}
}

@inproceedings{bessiere2006,
  author = {Bessiere, Christian and Hebrard, Emmanuel and Hnich, Brahim and Kiziltan, Zeynep and Walsh, Toby},
  title = {Filtering Algorithms for the \textsc{NValue} Constraint},
  booktitle = {Integration of {AI} and {OR} Techniques in Constraint Programming for Combinatorial Optimization Problems, Lecture Notes in Computer Science 3524},
  pages = {79--93},
  publisher = {Springer},
  year = {2005},
  doi = {10.1007/11493853_8},
  url = {https://doi.org/10.1007/11493853_8}
}

\addcontentsline{toc}{section}{References}

\end{document}